\newtheorem{definition}{Definition}
\newtheorem{lemma}{Lemma}
\newtheorem{theorem}{Theorem}
\newtheorem{assumption}{Assumption}
\newcommand{\mycomment}[1]{}
\newcommand{\toas}{\xrightarrow{{\rm a.s.}}}
\newcommand{\pth}[1]{\left( #1 \right)}
\newcommand{\calD}{{\mathcal{D}}}
\newcommand{\calE}{{\mathcal{E}}}
\newcommand{\calF}{{\mathcal{F}}}
\newcommand{\calH}{{\mathcal{H}}}
\newcommand{\calL}{{\mathcal{L}}}
\newcommand{\calN}{{\mathcal{N}}}
\newcommand{\calR}{{\mathcal{R}}}
\newcommand{\calS}{{\mathcal{S}}}
\newcommand{\calV}{{\mathcal{V}}}
\begin{document}

\title{Distributed Learning with Adversarial Agents Under Relaxed Network Condition\\
	 \thanks{Research reported in this paper was sponsored in part by the Army Research Laboratory under Cooperative Agreement W911NF-17-2-0196, and by National Science Foundation awards 1421918 and 1610543. The views and conclusions contained in this document are those of the authors and should not be interpreted as representing the official policies, either expressed or implied, of the the Army Research Laboratory, National Science Foundation or the U.S. Government. The U.S. Government is authorized to reproduce and distribute reprints for Government purposes notwithstanding any copyright notation here on.}}

\author{\IEEEauthorblockN{Pooja Vyavahare\IEEEauthorrefmark{1}, Lili Su\IEEEauthorrefmark{2}, Nitin H. Vaidya\IEEEauthorrefmark{3}}
\IEEEauthorblockA{\IEEEauthorrefmark{1}Coordinated Science Laboratory, University of Illinois at Urbana-Champaign\\
Email: poojav@illinois.edu}
\IEEEauthorblockA{\IEEEauthorrefmark{2}Computer Science and Artificial Intelligence Laboratory,	Massachusetts Institute of Technology\\
	Email: lilisu@mit.edu}
\IEEEauthorblockA{\IEEEauthorrefmark{3}Department of Computer Science, Georgetown University\\
	Email: nitin.vaidya@georgetown.edu}
}	

\maketitle

\begin{abstract}
	This work studies the problem of non-Bayesian learning over multi-agent network when there are some adversarial (faulty) agents in the network. At each time step, each non-faulty agent collects partial information about an unknown state of the world and tries to estimate true state of the world by iteratively sharing information with its neighbors. Existing algorithms in this setting require that all non-faulty agents in the network should be able to achieve consensus via local information exchange.

In this work, we present an analysis of a distributed algorithm which does not require the network to achieve consensus. We show that if every non-faulty agent can receive enough information (via iteratively communicating with neighbors) to differentiate the true state of the world from other possible states then it can indeed learn the true state. 

\end{abstract}

\begin{IEEEkeywords}
	Byzantine fault-tolerance, non-Bayesian learning
\end{IEEEkeywords}
\section{Introduction}
\label{sec:intro}

Distributed algorithms in multi-agent networks for various network settings have been studied since long time \cite{Tsitsiklis84,Gallager88}.
In this work, we consider a set of agents which are connected by directed links, thus forming a directed network. Each agent is attached to a sensor which senses some partial information about the state of the world (environment) in which the network is present. There is only one true state of the world and the aim for each agent is to estimate the true state by iteratively sharing information with its neighbors. \emph{Distributed learning} has been studied in different settings like in the presence of a fusion center \cite{Varshney12,Tsitsiklis93} and when there is no fusion center \cite{Gale03,Cattivelli11,Jakovetic12}. 

Non-Bayesian learning with the use of iterative distributed consensus algorithm was first proposed by Jadbabaie et. al. \cite{Jadbabaie12}. The approach proposed in \cite{Jadbabaie12} requires the network formed by the agents to achieve consensus in order to learn the true state. Since then the non-Bayesian learning has been applied in various network settings; see \cite{Nedic16} for a survey of results in this area.

%Our aim is to study a network of agents in which an unknown set of agents is adversarial. We assume that an adversarial agent suffers Byzantine faults, i.e., it may not follow the specified algorithm. An agent with Byzantine faults may send arbitrary information to its neighbors. We also assume that a faulty agent has full knowledge of the network and the algorithm specifications of all non-faulty agents. Learning the true state of the environment by non-faulty agents in a network with adversarial agents was first studied by \cite{Su16a}. Authors in \cite{Su16a} propose an algorithm in which every agent updates its belief by geometric average of its local Bayesian update and its neighbors' beliefs which has been studied in many works; e.g., \cite{Nedic15,Shahrampour13}.
%A Bayesian update rule is applied to learning the true state in \cite{Gale03} which is impractical in many network applications due to its high requirement of memory. 

Our aim is to study a network of agents in which an unknown set of agents is adversarial. We assume that an adversarial agent suffers Byzantine faults, i.e., may send arbitrary information to its neighbors and may not follow the specified algorithm. Learning the true state of the world in a network with adversarial agents was first studied by \cite{Su16a,Su16c}. The algorithm in \cite{Su16a} uses a geometric averaging update similar to that used in other works \cite{Nedic15,Shahrampour13}.

The algorithm analysis in \cite{Su16a,Su16c} requires that the network topology be such that non-faulty agents can achieve consensus by iteratively sharing their information with their neighbors. In this work we circumvent this limitation. We analyze the algorithm proposed in \cite{Su16a,Su16c} and show that in order to estimate the true state of the world by non-faulty agents, achieving distributed consensus is not required. Intuitively, we show that if the set of agents that can reach an agent can collectively estimate the true state then the agent can also estimate the true state almost surely. 

%Existing work in this area use a non-Bayesian learning approach which combines local Bayesian information with distributed consensus.
% In this setting, each agent updates its \emph{belief} of the true state in each iteration.
%More specifically if the set of agents which can communicate with an agent can collectively estimate the true state of the world then the agent can also estimate the true state. We analyze a distributed algorithm in which agents iteratively collect partial information about the state of the world and collaboratively try to estimate the true state. Our analysis shows that for all fault-free agents to estimate the true state, consensus among agents is not required.
% We also assume that a faulty agent has full knowledge of the network and the algorithm specifications of all non-faulty agents. 
%In this work we consider a synchronous setting, i.e., all agents send their information to their neighbors at the same time in a round.  

%distributed hypothesis learning 
%synchronous setting
%static network not time-varying networks
%non-Bayesian learning
%adversarial agents

\subsection{Preview}
We introduce the system model in Section~\ref{sec:formulation} and present the algorithm to estimate true state in presence of adversarial agents (which is first introduced in \cite{Su16a}) in Section~\ref{sec:faulty}. In Section~\ref{sec:f_matrix_property} we state our assumption on network along with our main contribution (Lemma~\ref{lm:f_matrix}). We use this lemma to analyze Algorithm~\ref{alg:fault} in Section~\ref{sec:f_alg_analysis}. We conclude the work in Section~\ref{sec:conclusion}.

%The outline of the paper and contributions of the work are as outlined below. We introduce the system model and basic notations in Section~\ref{sec:formulation}. In Section~\ref{sec:faulty} we present the algorithm to estimate true state in presence of adversarial agents and notations required for its analysis in Section~\ref{sec:f_matrix_property}. We present the analysis of the algorithm and main contribution of this work in Section~\ref{sec:f_alg_analysis}. Note that the algorithm presented in this work was first proposed in \cite{Su16a} but we present a novel analysis of the algorithm which gets around the need of achieving consensus in order to estimate the true state of the environment. We extend the analysis to a network with no adversaries in Section~\ref{sec:faulty} and conclude the work with possible future works in Section~\ref{sec:conclusion}.
\section{Problem Formulation}
\label{sec:formulation}

 We consider a system model similar to that in \cite{Vaidya14,Vaidya12}. We consider a set of agents which are connected via directed links thus forming a directed network $G=(\calV,\calE)$ where $|\calV| =n.$ We consider synchronous system setting. Maximum $f$ agents can suffer with Byzantine faults at each execution of the algorithm. Any agent with Byzantine fault may send arbitrary different information to different neighbors. Adversarial agents can collaborate with each other and have full knowledge of the system. Let $\calF$ be the set of faulty agents and $\calN$ be set of non-faulty (good) agents in an execution. Each good agent at every execution knows the upper bound on number of faulty agents, i.e., $f,$ but does not know the set $\calF.$ Let $|\calF|=\phi.$
 
 Every agent collects some partial information about the world. The aim of every fault-free agent is to estimate the true state of the world by iteratively sharing information with neighbors. For this we use the same model as presented in \cite{Jadbabaie12,Su16c}. Let there be $m$ possible states of the world and we represent the set of states by: $\Theta = \{\theta_1,\theta_2, \ldots,\theta_m\}.$ Out of $m$ possible states, there is one true state $\theta^* \in \Theta.$ Initially, at $t=0,$ the true state is unknown to every agent in the network. At every time iteration $t,$ every agent independently observes some information (signal) about the state $\theta^*.$ Observed signal space for agent $i$ is represented by $\calS_i$ and we assume that $|\calS_i| < \infty.$  Let $\ell_i(.|\theta)$ be the marginal distribution of the signal observed by agent $i$ when the true state is $\theta.$ Each agent $i$ knows the structure of its observed signal which is represented by a set of marginal distributions $\calD_i = \{\ell_i(\omega_i|\theta)| \theta \in \Theta, \omega_i \in \calS_i \}.$ We also assume that $\forall \omega_i \in \calS_i$ and $\forall \theta \in \Theta,$ $\ell_i(\omega_i|\theta) >0.$ In other words, the support of the distribution $\ell_i(.|\theta)$ is the whole signal space. Let $s_{1,t}^i$ be the signal history observed by agent $i$ up to time $t.$ Throughout this work, log of any vector ${\bf x}$ is defined as a vector ${\bf y}$ with ${\bf y}[i] := \log ({\bf x}[i]),$ i.e., the log operation on a vector is element-wise.

%Right now we consider that there are no faulty agents. For each $i \in \calV,$ let $\calI_i$ be the set of incoming neighbors of agent $i.$
%In each execution of the algorithm there can be up to $f$ adversarial agents . 
%
\section{Non-Bayesian learning with faulty agents}
\label{sec:faulty}

In this section we present the algorithm for non-Bayesian learning when some agents in the network are faulty. Note that Algorithm~\ref{alg:fault} was first presented in \cite{Su16a} and in this work we present an improved analysis which circumvent the need to achieve consensus in order to learn the true state by non faulty agents. Algorithm~\ref{alg:fault} and some related concepts are presented here for the sake of completeness of this manuscript. For more details refer to \cite{Su16a,Su18a,Vaidya14}.

 For convenience of presentation, we assume that the non-faulty agents are numbered $1,2,\ldots,n-\phi$ (where $\phi=|\calF|$ is the number of faulty agents at every time iteration). At each time iteration $t,$ every non-faulty agent $i$ maintains a vector $\mu_t^i \in \mathbb{R}^m$ of the possible states of the world. $\mu_t^i$ is a stochastic vector over all states $\theta \in \Theta$ with $ 0 \leq\mu_t^i(\theta)\leq 1$ and $\sum_{\theta} \mu_t^i(\theta) =1 \forall i.$ We assume that initially at $t=0,$ $\mu_0^i(\theta) = 1/m ~\forall i, ~\forall \theta \in \Theta.$

\begin{algorithm}

	\caption{\cite{Su16a} Non-Bayesian learning with faulty agents: for agent $i$}
	\label{alg:fault}
	{\normalsize
		\vskip 0.2\baselineskip
		$Z^i\gets \emptyset$\;		
		${\bf x}^i\gets \log \mu_{t-1}^i;$ \textcolor{blue}{\textit{$({\bf x}^i$ is a vector over all states with ${\bf x}^i(\theta) := \log \mu_{t-1}^i(\theta) ~\forall \theta \in \Theta)$}}\
		
		Transmit ${\bf x}^i$ on all outgoing links.\;
		\vskip 0.2\baselineskip
		
		Receive messages on all incoming links. Let these multiset of messages be $R^i.$
		\vskip 0.2\baselineskip
		
		\For {every $C\subseteq R^i\cup \{{\bf x}^i\}$ such that $|C|=(m+1)f+1$}
		{add to $Z^i$ a {\em Tverberg point} of multiset $C$}
		\vskip 0.2\baselineskip
		
		$\eta_t^i\gets \frac{1}{1+|Z^i|} \pth{{\bf x}^i+\sum_{{\bf z}\in Z^i}{\bf z}}$\;
		\vskip 0.2\baselineskip
		
		Observe $s_t^i$\;
		
		\For{$\theta\in\Theta$}
		%$\forall \theta\in\Theta$,
		{$\ell_i(s_{1,t}^i|\theta)\gets \ell_i(s^i_t|\theta)\, \ell_i(s_{1,t-1}^i|\theta)$\;
			$\mu_{t}^i(\theta)\gets \frac{\ell_i(s_{1, t}^i|\theta)\exp \pth{\eta_{t}^i(\theta)}}{\sum_{p=1}^m \ell_i(s_{1, t}^i|\theta_p)\exp \pth{\eta_{t}^i(\theta_p)}}$\;}
	
	}
	
\end{algorithm}
The \emph{Tverberg point} is guaranteed to be in the convex hull of values received from non-faulty agents. See \cite{Vaidya14} for definition of \emph{Tverberg point}. As shown in \cite{Su16a}, the dynamics of $\eta_t^i$ for fault free agent $i$ ($1\leq i\leq n-\phi$) of Algorithm~\ref{alg:fault} can be written as:
\begin{equation}
\eta_t^i(\theta) =\log \prod_{j=1}^{n-\phi}\mu_{t-1}^j(\theta)^{{\bf A}_{ij}[t]}, ~~~\forall \theta\in \Theta,
\label{eq:eta}
\end{equation}
where ${\bf A}[t]$ is a $(n-\phi) \times (n-\phi)$ row stochastic matrix corresponding to the execution of Algorithm~\ref{alg:fault} at time $t.$ As shown in \cite{Vaidya14}, ${\bf A}[t]$ is affected by the behavior of faulty agents. For any $\theta_1,\theta_2 \in \Theta,$ and for any agent $i \in \calV,$ let $\psi_t^i(\theta_1,\theta_2)$ and $\calL_t(\theta_1,\theta_2)$ be as follows:
\begin{equation}
\bm{\psi}_{t}^i(\theta_1, \theta_2)\triangleq \log \frac{\mu_t^i(\theta_1)}{\mu_t^i(\theta_2)}, \quad \calL^i_{t}(\theta_1, \theta_2)~\triangleq~\log \frac{\ell_i(s_{t}^i|\theta_1)}{\ell_i(s_{t}^i|\theta_2)}.
\label{eq:f_psi_L}
\end{equation}
Following the analysis in \cite{Su16a} the evolution of $\bm{\psi}_{t}^{i}(\theta, \theta^*)$ can be written as:
\begin{equation}
\psi_t^{i}(\theta,\theta^*) = \sum_{r=1}^{t} \sum_{j=1}^{n-\phi} {\bf \Phi}_{ij}(t,r+1) \sum_{k=1}^{r} \calL_k^j(\theta,\theta^*). 
\label{eq:f_psi_i}
\end{equation}
where ${\bf \Phi}_{ij}(t,r)$ is $(i,j)$-th element of ${\bf \Phi}(t,r) = {\bf A}[t] \ldots {\bf A}[r]$  for $1 \leq r \leq t+1.$ By convention, ${\bf \Phi}(t,t)={\bf A}[t]$ and ${\bf \Phi}(t,t+1) = {\bf I}.$

\subsection{Properties of ${\bf \Phi}(t,r)$}
	\label{sec:f_matrix_property}

Many concepts of this section were presented in \cite{Vaidya12,Su16a} and we present them here for the sake of completeness of this manuscript. Recall that ${\bf A}[t]$ is a row stochastic matrix which defines the run of Algorithm~\ref{alg:fault} at time $t.$ Note that Algorithm~\ref{alg:fault} uses Tverberg points to generate $\eta_t^i$ which is obtained by rejecting extreme values received from neighbors. It is shown in \cite{Vaidya14} that this can be seen as removing some incoming links at each round of the algorithm and the effective network can be characterized by \emph{reduced graph} of $G(\calV,\calE).$
%This effectively removes the messages from faulty agents.

\begin{definition} \cite{Vaidya14}
	A reduced graph $\calH(\calN,\calE_\calF)$ of network $G(\calV,\calE)$ is obtained by:
	\begin{enumerate}
		\item removing all faulty agents $\calF$ and all the links incident on the these agents
		\item for all non-faulty agents, removing up to $mf$ additional incoming links.
	\end{enumerate}
\label{def:reduced_graph}
\end{definition} 

Let the set of all such reduced graphs be $\calR_\calF.$ By the definition of reduced graph and finiteness of $G$ note that the number of possible reduced graphs of $G$ is finite, i.e., $|\calR_\calF|=r_f < \infty.$ A \emph{source component} in a reduced graph is a strongly connected set of agents which does not have any incoming links from outside that set. We make the following assumption in our analysis.

\begin{assumption}
	Every reduced graph contains one or more source components and each agent in the reduced graph is either a part of a source component or has a directed path from one or more source components.
	\label{as:f_multiple_source}
\end{assumption}

\textbf{Remark:} Note that analysis in \cite{Su16a} assumes that every reduced graph contains only one source component. This assumption is shown \cite{Su16a,Vaidya14} to be sufficient to achieve approximate Byzantine vector consensus. We do not assume that there is a unique source component in each reduced graph. Thus, under our assumption, consensus on arbitrary inputs is not necessarily guaranteed. However, under Assumption~\ref{as:fault_source} stated below regarding the sensor observations, the learning problem is solvable.

Assumption~\ref{as:f_multiple_source} is different than the one made in \cite{Su16a} and thus is not sufficient to achieve consensus among fault free agents. The key contribution of this work is to show correctness of Algorithm~\ref{alg:fault} under Assumption~\ref{as:f_multiple_source}.

It was shown in \cite{Vaidya12} that for any ${\bf A}[t]$ there exists a reduced graph of $G,$ say $\calH[t]$ whose transition matrix is ${\bf H}[t],$ such that ${\bf A}[t] \geq \beta {\bf H}[t]$ where $0<\beta<1$ is a constant.  
For more details on this relationship and definition of $\beta$ refer to \cite{Vaidya12}. Now we present a new result which will be used for the analysis.

\begin{lemma}
	For ${\bf \Phi}(t,r+1),$ with $t-r\geq\nu:=r_f(n-\phi),$ there exists a reduced graph $\calH_r$ such that the following holds for each $i,$ $1 \leq i \leq n-\phi:$ there exists a source component $P_r^i \in \calH_r$ such that 
	${\bf \Phi}_{ij}(t,r+1) \geq \beta^\nu/n$ for each agent $j$ in that source component of $\calH_r.$
	
	\label{lm:f_matrix}
\end{lemma}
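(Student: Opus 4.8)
The plan is to reduce the statement about the row-stochastic products $\Phi(t,r+1)$ to a purely combinatorial reachability statement about products of reduced-graph matrices, and then to recover the factor $1/n$ by an averaging argument over the (possibly several) source components. Two structural facts drive everything. First, for nonnegative matrices the entrywise inequality ${\bf A}[s]\ge\beta{\bf H}[s]$ is preserved under multiplication, so any product of such factors dominates $\beta^{(\#\text{factors})}$ times the corresponding product of the reduced-graph matrices ${\bf H}[s]$. Second, each ${\bf H}[s]$ carries a self-loop at every non-faulty node (each agent always retains its own estimate), so a node can ``hold'' accumulated influence across any factor for free. The key to keeping the bound independent of the horizon $t-r$ is to split the product at the $\nu$-th factor from the right: write $\Phi(t,r+1)={\bf C}\,{\bf B}$ with ${\bf B}=\Phi(r+\nu,r+1)$ (the first $\nu$ matrices) and ${\bf C}=\Phi(t,r+\nu+1)$ (the remaining $t-r-\nu\ge 0$ matrices, with the convention ${\bf C}={\bf I}$ when $t-r=\nu$). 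Since ${\bf C}$ is a product of row-stochastic matrices it is itself row-stochastic, so it transports mass without spending any factor of $\beta$.

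The combinatorial core lives on the window ${\bf B}$. By pigeonhole, among the $\nu=r_f(n-\phi)$ reduced graphs ${\bf H}[r+1],\dots,{\bf H}[r+\nu]$ some fixed reduced graph $\calH_r$ occurs at least $n-\phi$ times. Fix a source component $P$ of $\calH_r$ and a node $i'$ reachable from $P$ in $\calH_r$; by Assumption~\ref{as:f_multiple_source} every non-faulty node enjoys this property for at least one source component. Since $P$ is strongly connected and $i'$ is reached, for every $j\in P$ there is a directed path in $\calH_r$ from $j$ to $i'$ of length at most $n-\phi-1$. I would build a time-respecting path for ${\bf B}$ by advancing one edge of this path at each of the (at least $n-\phi$) occurrences of $\calH_r$ and idling on the self-loop at every other factor; because $\calH_r$ occurs at least $n-\phi$ times, this exhausts the path, so the relevant entry of ${\bf H}[r+\nu]\cdots{\bf H}[r+1]$ is positive and therefore ${\bf B}_{i'j}\ge\beta^{\nu}$ for every $j\in P$ and every $i'\in\mathrm{Reach}(P)$, where $\mathrm{Reach}(P)$ denotes the set of nodes reachable from $P$ in $\calH_r$.

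It remains to recover the $1/n$. For $j\in P$ we have $\Phi_{ij}(t,r+1)=\sum_{i'}{\bf C}_{ii'}{\bf B}_{i'j}\ge\beta^{\nu}\sum_{i'\in\mathrm{Reach}(P)}{\bf C}_{ii'}$. Because every non-faulty node lies in $\mathrm{Reach}(P)$ for at least one source component $P$ (Assumption~\ref{as:f_multiple_source}) and ${\bf C}$ is row-stochastic, the family $\{\mathrm{Reach}(P)\}_{P}$ covers all columns, so $\sum_{P}\sum_{i'\in\mathrm{Reach}(P)}{\bf C}_{ii'}\ge\sum_{i'}{\bf C}_{ii'}=1$. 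Taking $P_r^i$ to be a source component that maximizes $\sum_{i'\in\mathrm{Reach}(P)}{\bf C}_{ii'}$, and noting there are at most $n-\phi\le n$ source components, yields $\sum_{i'\in\mathrm{Reach}(P_r^i)}{\bf C}_{ii'}\ge 1/n$, whence $\Phi_{ij}(t,r+1)\ge\beta^{\nu}/n$ for every $j\in P_r^i$. Note the same $P_r^i$ serves all $j\in P_r^i$ at once, since $\mathrm{Reach}(P)$ does not depend on which $j\in P$ is chosen.

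I expect the main obstacle to be exactly this final averaging step, which is where the present analysis departs from \cite{Su16a}. Under the single-source assumption of \cite{Su16a} one has $\mathrm{Reach}(P)=\calN$ and the factor is simply $1$; with several source components one must instead argue that each agent still allocates a non-negligible ($\ge 1/n$) share of its stochastic weight to the reach of a \emph{single} source component, and verify that this choice is uniform over $j\in P_r^i$. A secondary but essential point is the horizon-independence of the constant: one must resist applying ${\bf A}[s]\ge\beta{\bf H}[s]$ to all $t-r$ factors, which would produce the useless bound $\beta^{t-r}$, and instead spend $\beta$ only on the $\nu$-factor window ${\bf B}$ while the row-stochastic tail ${\bf C}$ carries the influence forward at no cost.
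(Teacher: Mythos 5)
Your proposal is correct and follows essentially the same route as the paper's proof: the same split of ${\bf \Phi}(t,r+1)$ into a row-stochastic head and a $\nu$-factor window, the same pigeonhole argument producing a reduced graph $\calH_r$ occurring at least $n-\phi$ times, and the same bounded-path-plus-self-loop argument giving entries of size $\beta^{\nu}$ on a source component. The only (immaterial) difference is the final step: the paper extracts the factor $1/n$ by picking a single column $j$ of the head row with entry at least $1/(n-\phi)$ and inheriting that row's source component, whereas you average the head row's unit mass over the covering family $\{\mathrm{Reach}(P)\}_P$ of at most $n-\phi$ reach sets --- both one-line arguments yielding the same bound.
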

\begin{proof}
	We will prove the result for two cases. First for $t-r = \nu,$ recall the product matrix ${\bf \Phi}(t,r+1) = {\bf A}[t] \ldots {\bf A}[r+1]$ and for any ${\bf A}[x] \geq \beta {\bf H}[x]$ where ${\bf H}[x]$ is the adjacency matrix of the reduced graph corresponding to $x$-th round of Algorithm~\ref{alg:fault}. Thus,
	\begin{equation*}
	   {\bf \Phi}(t,r+1) \geq \beta^{\nu} \prod_{x=r+1}^{t} {\bf H}[x].
	\end{equation*}
	The product ${\bf \Phi}(t,r+1)$ contains $\nu=r_f(n-\phi)$ reduced graphs of $G.$ As there are $r_f$ distinct reduced graphs, there is one reduced graph $\calH_r$ which will occur at least $(n-\phi)$ times in ${\bf \Phi}(t,r+1).$ By Assumption~\ref{as:f_multiple_source}, every agent has a directed path from at least one source component in $\calH_r$ and let $P_r^i$ be any one source component which has a directed path to $i$ in $\calH_r.$ As the maximum length of any path in $\calH_r$ is $(n-\phi-1),$ for each agent $i,$ $(\prod_{x=r+1}^{t} {\bf H}[x])_{ij} \geq 1$ for all $j \in P_r^i.$ Thus for each agent $i,$ and $j \in P_r^i,$ ${\bf \Phi}_{ij}(t,r+1) \geq \beta^{\nu}>\beta^{\nu}/n.$ Hence the result is proved when $t-r=\nu.$
	
	Now, for any value of $t,r$ such that $t-r =\nu+k$ where $k \geq 1$ is an integer, we get
	\begin{align*}
	{\bf \Phi}(t,r+1) &= {\bf A}[t]\ldots{\bf A}[t-k+1] {\bf A}[t-k] \ldots {\bf A}[r+1] \\
	&= {\bf \Phi}(t,t-k+2) {\bf \Phi}(t-k+1,r+1).
	\end{align*}
	Let the $i$-th row of ${\bf \Phi}(t-k+1,r+1)$ be $K_i$ and that of ${\bf \Phi}(t,r+1)$ be $L_i.$ Then $L_i$ can be written in terms of $K_i$ as:
	\begin{equation*}
	L_i = \sum_{j=1}^{n-\phi} {\bf \Phi}_{ij}(t,t-k+2) K_j.
	\end{equation*}
	Recall that ${\bf \Phi}(t,t-k+2)$ is a $(n-\phi) \times (n-\phi)$ row stochastic matrix thus for every $i,$ there exists some $j$ such that ${\bf \Phi}_{ij}(t,t-k+2) \geq 1/(n-\phi) \geq 1/n.$
	By first part of the proof, there exists a reduced graph $\calH_r$ such that for each row $j$ of ${\bf \Phi}(t-k+1,r+1)$ there exists a source component of $\calH_r$ such that ${\bf \Phi}_{jp}(t-k+1,r+1)\geq \beta^\nu$ where $p$ belongs to that source component. Thus, for each row $L_i$ of ${\bf \Phi}(t,r+1)$ there exists a reduced graph $\calH_r$ such that there exists a source component $P_r^i$ of $\calH_r$ such that ${\bf \Phi}_{ip}(t,r+1) \geq \beta^\nu /n$ where $p$ belongs to $P_r^i.$

\end{proof}	
\subsection{Analysis of Algorithm~\ref{alg:fault}}
\label{sec:f_alg_analysis}

In this section we present the analysis of Algorithm~\ref{alg:fault} under Assumption~\ref{as:f_multiple_source} which does not require the network topology to achieve distributed consensus. We make the following assumption on agents' capacity to identify the true state of the world based on the Kullback-Leiber divergence between the true state's marginal $l_j(.|\theta^*)$ and marginal of any other state $l_j(.|\theta).$ The Kullback-Leiber divergence is defined as:
\begin{equation*}
   D(l_j(.|\theta^*)||l_j(.|\theta) = \sum_{\omega_i \in \calS_j} l_j(\omega_i|\theta^*) \log \frac{l_j(\omega_i|\theta^*)}{l_j(\omega_i|\theta)}.
\end{equation*}
 
\begin{assumption}
	Let $\P_\calH$ be the set of all source components in any reduced graph $\calH$ of $G(\calV,\calE).$ Then, for any $\theta \neq \theta^*,$
	for every source component $P \in \P_\calH$ for every reduced graph $\P_\calH$ the following holds:
	\[  \sum_{j \in P} D(l_j(.|\theta^*)|| l_j(.|\theta) ) \neq 0.  \]
	\label{as:fault_source}
\end{assumption}  

Intuitively, Assumption~\ref{as:fault_source} states that in any reduced graphs all agents in any source component can collaboratively detect the true state.
Before presenting our main result we define few notations from \cite{Su16a} which will be used to prove our main result. 
For each $\theta \in \Theta$ and $i \in \calV$ define $H_i(\theta,\theta^*)$ as:
\begin{align}
H_i(\theta,\theta^*) &\triangleq \sum_{\omega_i \in \calS_i} \ell_i(\omega_i|\theta^*) \log \frac{\ell_i(\omega_i|\theta)}{\ell_i(\omega_i|\theta^*)} \nonumber \\
&= -D(\ell_i(.|\theta^*) || \ell_i(.|\theta)) 
\leq 0. \label{eq:fault_h_negative}
\end{align}
Let $\calH$ be any arbitrary reduced graph with a set of source components $\P_\calH$ and $\P = \cup_{\calH \in \calR_\calF} \P_\calH$ be the set of all possible source components for all the reduced graph. Then we define $C_0$ and $C_1$ as:
\begin{align}
-C_0 &\triangleq \min_{i \in \calV} \min_{\theta_1,\theta_2 \in \Theta; \theta_1 \neq \theta_2} \min_{\omega_i \in \calS_i} \left( \log \frac{\ell_i(\omega_i|\theta_1)}{\ell_i(\omega_i|\theta_2)}  \right), \\
C_1 &\triangleq \min_{P \in \P} \min_{\theta,\theta^*\in \Theta; \theta \neq \theta^*} \sum_{i \in P} D(\ell_i(.|\theta^*)|| \ell_i(.|\theta)). \label{eq:fault_c0_c1}
\end{align}
Due to finiteness of $\Theta$ and $\calS_i$ for each agent $i,$ we know that $C_0 < \infty$ and $C_0 \geq 0.$ Also under Assumption~\ref{as:fault_source} we get $C_1 > 0.$ Since the support of $\ell_j(.|\theta)$ is the whole signal space $\calS_j$ for each $j \in \calV,$ it is easy to observe that
\begin{align}
0\ge H_j(\theta, \theta^*) &\geq \min_{w_j\in \calS_j} \pth{\log \frac{\ell_j(w_j|\theta)}{\ell_j(w_j|\theta^*)}} \nonumber \\ 
&\geq -C_0 >-\infty.
\label{eq:f_h_j_finite}
\end{align}

The following lemma is used to prove our main result.
\begin{lemma} %\cite{Su16a}
	Under Assumption~\ref{as:fault_source}, for Algorithm~\ref{alg:fault} the following statement is true for any $\theta \neq \theta^*:$
	\begin{multline*}
	\frac{1}{t^2} \sum_{r=1}^{t} \left( \sum_{j=1}^{n-\phi} {\bf \Phi}_{ij}(t,r+1) \sum_{k=1}^{r} \calL_k^j(\theta,\theta^*) \right.
	\\ \left. -r \sum_{j=1}^{n-\phi} {\bf \Phi}_{ij}(t,r+1) H_j(\theta,\theta^*) \right) \toas 0.
%	\label{eq:fault_L_H_convergence}
	\end{multline*}
	\label{lm:f_L_H_convergence}
\end{lemma}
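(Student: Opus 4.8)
The plan is to recognize this statement as a law-of-large-numbers claim about the fluctuations of the empirical log-likelihood ratios around their means, damped by the $1/t^2$ scaling. First I would observe that $H_j(\theta,\theta^*)$ is exactly the expectation of $\calL_k^j(\theta,\theta^*)$ under the true state: since $s_k^j \sim \ell_j(\cdot\mid\theta^*)$, we have $\expect{\calL_k^j(\theta,\theta^*)} = \sum_{\omega_j \in \calS_j} \ell_j(\omega_j\mid\theta^*)\log\frac{\ell_j(\omega_j\mid\theta)}{\ell_j(\omega_j\mid\theta^*)} = H_j(\theta,\theta^*)$. This lets me center the variables: setting $X_k^j := \calL_k^j(\theta,\theta^*) - H_j(\theta,\theta^*)$ and $S_r^j := \sum_{k=1}^r X_k^j$, the inner difference in the lemma collapses to $\sum_{j=1}^{n-\phi} {\bf \Phi}_{ij}(t,r+1)\,S_r^j$, so the entire quantity equals $\frac{1}{t^2}\sum_{r=1}^t \sum_{j=1}^{n-\phi} {\bf \Phi}_{ij}(t,r+1)\,S_r^j$.

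Next I would establish the two facts that drive the argument. For each fixed agent $j$, the signals $\{s_k^j\}_k$ are i.i.d.\ over time given $\theta^*$, so the $\{X_k^j\}_k$ are i.i.d.\ zero-mean; moreover they are bounded, since $|\calL_k^j(\theta,\theta^*)| \leq C_0$ by the definition of $C_0$ and $|H_j(\theta,\theta^*)| \leq C_0$ by \prettyref{eq:f_h_j_finite}, giving $|X_k^j| \leq 2C_0 < \infty$. The strong law of large numbers then yields $S_r^j/r \toas 0$ for each $j$, and because there are only finitely many agents $1 \leq j \leq n-\phi$, the intersection of these almost-sure events still has probability one, so $\max_{1\le j\le n-\phi} |S_r^j| = o(r)$ almost surely.

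The second fact is that ${\bf \Phi}(t,r+1)$ is row stochastic, being a product of the row-stochastic matrices ${\bf A}[\cdot]$, so ${\bf \Phi}_{ij}(t,r+1)\geq 0$ and $\sum_{j=1}^{n-\phi} {\bf \Phi}_{ij}(t,r+1) = 1$. This gives the uniform bound $\left|\sum_{j} {\bf \Phi}_{ij}(t,r+1)\,S_r^j\right| \leq \max_j |S_r^j|$, where crucially the right-hand side depends neither on $t$ nor on the particular weight configuration. Combining the two facts, the absolute value of the target is at most $\frac{1}{t^2}\sum_{r=1}^t \max_j |S_r^j|$. Since $\max_j|S_r^j| = o(r)$ almost surely, a standard Ces\`aro (Stolz) estimate finishes: splitting the sum at a threshold $R$ beyond which $\max_j|S_r^j|\leq \delta r$ and using $\sum_{r=1}^t r \leq t^2$ shows $\frac{1}{t^2}\sum_{r=1}^t o(r)\to 0$, so the whole expression converges to $0$ almost surely.

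The step I expect to be most delicate is the $t$-dependence of the weights ${\bf \Phi}_{ij}(t,r+1)$: because the product matrix is re-formed for each terminal time $t$, one cannot fix the weights and pass to a limit inside the sum. The resolution is to use nothing about the weights beyond row-stochasticity, so that the bound $\max_j|S_r^j|$ is uniform in both $t$ and the weights, and all the randomness is absorbed into the almost-sure rate $o(r)$ from the SLLN. I would also note that Assumption~\ref{as:fault_source} is not actually invoked for this fluctuation bound; it enters the broader analysis only through $C_1 > 0$ to control the drift term $r\sum_j {\bf \Phi}_{ij}(t,r+1) H_j(\theta,\theta^*)$, and Lemma~\ref{lm:f_matrix} likewise is reserved for that term rather than this one. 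Nonetheless I would keep the hypothesis as stated, for consistency with the surrounding development.
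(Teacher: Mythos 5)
Your proposal is correct and follows essentially the same route as the paper's own proof: both reduce the claim to the strong law of large numbers applied to the centered sums $\sum_{k=1}^r \calL_k^j(\theta,\theta^*) - rH_j(\theta,\theta^*)$, use only the row-stochasticity of ${\bf \Phi}(t,r+1)$ to bound the weighted combination uniformly in $t$, and finish with a split-sum Ces\`aro estimate (the paper splits at $r=\sqrt{t}$ while you split at a fixed threshold $R$, a cosmetic difference). Your side remarks are also accurate: Assumption~\ref{as:fault_source} is not actually used in the paper's proof of this lemma either, and the paper likewise reserves Lemma~\ref{lm:f_matrix} for the drift term in Theorem~\ref{thm:f_hypo_convergence}.
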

\begin{proof}
	The lemma statement is similar (but not identical) to Lemma~3 of \cite{Su16a}. The proof of Lemma~3 of \cite{Su16a} requires each row of ${\bf \Phi}$ to converge to an identical stochastic vector. We do not have this requirement; moreover under Assumption~\ref{as:f_multiple_source} a row of ${\bf \Phi}(t,r+1)$ may not converge as $t$ goes to infinity. The proof is presented in Appendix~\ref{app:f_L_H_convergence}.
\end{proof}

Now we present our main result for non-Bayesian learning when some agents in the network are faulty.

\begin{theorem}
	Under Assumption~\ref{as:fault_source}, for Algorithm~\ref{alg:fault} every agent $i$ will concentrate its vector on the true state $\theta^*$ almost surely, i.e., $\mu_t^{i}(\theta) \toas 0~ \forall \theta \neq \theta^*.$
	\label{thm:f_hypo_convergence}
\end{theorem}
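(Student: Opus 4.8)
The plan is to reduce the statement to showing that, for each fixed $\theta\neq\theta^*$, the log-belief-ratio $\psi_t^i(\theta,\theta^*)$ diverges to $-\infty$ almost surely, and then to convert this divergence into the claimed concentration of $\mu_t^i$. By \eqref{eq:f_psi_i}, $\psi_t^i(\theta,\theta^*)$ is precisely the stochastic term appearing in Lemma~\ref{lm:f_L_H_convergence}, so that lemma lets me replace the random evolution by its deterministic drift: almost surely,
\[
\frac{\psi_t^i(\theta,\theta^*)}{t^2}
= \frac{1}{t^2}\sum_{r=1}^{t} r \sum_{j=1}^{n-\phi} {\bf \Phi}_{ij}(t,r+1)\, H_j(\theta,\theta^*) + o(1).
\]
Thus it suffices to bound the deterministic drift term from above by $-c\,t^2$ for some constant $c>0$.

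To obtain this drift bound, note first that every summand ${\bf \Phi}_{ij}(t,r+1)H_j(\theta,\theta^*)$ is nonpositive by \eqref{eq:fault_h_negative}, so I may upper bound the inner sum by retaining only the source component and discarding the remaining nonpositive terms. For each $r$ with $t-r\geq\nu$, Lemma~\ref{lm:f_matrix} supplies a source component $P_r^i$ with ${\bf \Phi}_{ij}(t,r+1)\geq \beta^\nu/n$ for all $j\in P_r^i$; since $H_j\le 0$, enlarging each such coefficient only decreases the contribution, and hence
\[
\begin{aligned}
\sum_{j=1}^{n-\phi}{\bf \Phi}_{ij}(t,r+1)H_j(\theta,\theta^*)
&\le \frac{\beta^\nu}{n}\sum_{j\in P_r^i} H_j(\theta,\theta^*) \\
&= -\frac{\beta^\nu}{n}\sum_{j\in P_r^i} D(\ell_j(.|\theta^*)\,\|\,\ell_j(.|\theta)) \\
&\le -\frac{\beta^\nu C_1}{n},
\end{aligned}
\]
where the last step invokes the uniform lower bound $C_1>0$ from \eqref{eq:fault_c0_c1} under Assumption~\ref{as:fault_source}. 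Weighting by $r$ and summing over $r=1,\dots,t-\nu$ (the remaining $\nu$ terms are nonpositive and are dropped from above) gives a drift term at most $-\tfrac{\beta^\nu C_1}{n}\cdot\tfrac{(t-\nu)(t-\nu+1)}{2}$, so its $t^{-2}$-normalization has $\limsup$ at most $-\beta^\nu C_1/(2n)$.

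Combining the two displays yields $\limsup_{t\to\infty}\psi_t^i(\theta,\theta^*)/t^2 \le -\beta^\nu C_1/(2n) < 0$ almost surely, whence $\psi_t^i(\theta,\theta^*)\toas-\infty$, i.e. $\mu_t^i(\theta)/\mu_t^i(\theta^*)\toas 0$. The concluding step is deterministic: intersecting the finitely many almost-sure events over $\theta\neq\theta^*$, the normalization $1=\mu_t^i(\theta^*)\bigl(1+\sum_{\theta\neq\theta^*}\mu_t^i(\theta)/\mu_t^i(\theta^*)\bigr)$ forces $\mu_t^i(\theta^*)\toas 1$ and therefore $\mu_t^i(\theta)\toas 0$ for every $\theta\neq\theta^*$, as claimed.

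I expect the genuine difficulty to be entirely absorbed into Lemma~\ref{lm:f_L_H_convergence}: decoupling the accumulated signal increments $\sum_k \calL_k^j(\theta,\theta^*)$ from their means $r\,H_j(\theta,\theta^*)$ \emph{without} assuming that the rows of ${\bf \Phi}(t,r+1)$ converge is the delicate part — a martingale/strong-law estimate tuned to the $1/t^2$ scaling — and is relegated to the appendix. Granting that lemma, the remaining argument above is a clean deterministic drift estimate; the only point needing care is that the source component $P_r^i$ from Lemma~\ref{lm:f_matrix} depends on $r$ and $t$, which is harmless precisely because $C_1$ is defined in \eqref{eq:fault_c0_c1} as a minimum over \emph{all} source components of \emph{all} reduced graphs, so the bound $-\beta^\nu C_1/n$ is uniform in $r$, $t$, and $\theta$.
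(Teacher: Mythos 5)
Your proposal is correct and follows essentially the same route as the paper's own proof: the same add-and-subtract decomposition of $\psi_t^i(\theta,\theta^*)$ handled by Lemma~\ref{lm:f_L_H_convergence}, the same drift bound via Lemma~\ref{lm:f_matrix} and the uniform constant $C_1$ from \eqref{eq:fault_c0_c1}, and the same $t^{-2}$-normalized conclusion. Your explicit remarks on the $(r,t)$-dependence of $P_r^i$ and the final normalization step $\mu_t^i(\theta^*)\to 1$ are slightly more careful than the paper's terse treatment, but they are elaborations, not deviations.
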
 
\begin{proof}
	For any $i \in \calN$ to show $\lim_{t \rightarrow \infty} \mu_t^{i} \toas 0$ for $\theta \neq \theta^*,$ it is enough to show that $\psi_t^{i}(\theta,\theta^*) \toas -\infty.$ 
	By \eqref{eq:f_h_j_finite} we know that $|\sum_{j=1}^{n-\phi}{\bf \Phi}_{ij}(t,r+1)  H_j(\theta, \theta^*)|\le C_0<\infty$ for each agent $i \in \calN.$ Note that ${\bf \Phi}(t,r+1)$ is a row stochastic matrix. Due to finiteness of $\sum_{j=1}^{n-\phi}{\bf \Phi}_{ij}(t,r+1)  H_j(\theta, \theta^*)$ by adding and subtracting $r \sum_{j=1}^{n-\phi}{\bf \Phi}_{ij}(t,r+1)  H_j(\theta, \theta^*)$ from \eqref{eq:f_psi_i}, we get,
	\begin{align}
	\psi_t^{i}(\theta,\theta^*) = \sum_{r=1}^{t} \left( \sum_{j=1}^{n-\phi} {\bf \Phi}_{ij}(t,r+1) \sum_{k=1}^{r} \calL_k^j(\theta,\theta^*) \right. \nonumber\\
	 \left. - r \sum_{j=1}^{n-\phi} {\bf \Phi}_{ij}(t,r+1) H_j(\theta,\theta^*) \right) \nonumber\\
	  + \sum_{r=1}^{t} r \sum_{j=1}^{n-\phi} {\bf \Phi}_{ij}(t,r+1) H_j(\theta,\theta^*).
	\label{eq:f_psi_h_add}
	\end{align}
    We first derive bound for the second term.
	\begin{align}
	&\sum_{r=1}^{t} r \sum_{j=1}^{n-\phi} {\bf \Phi}_{ij}(t,r+1) H_j(\theta,\theta^*) \nonumber\\ 
	&\leq \sum_{r:t-r\geq \nu} r  \sum_{j \in P_r^i} {\bf \Phi}_{ij}(t,r+1) H_j(\theta,\theta^*), \label{eq:f_sum_bd} 
	\end{align}
	where for agent $i,$ $P_r^i$ is a source component of $\calH_r$ for which the lower bound of Lemma~\ref{lm:f_matrix} holds. The above inequality holds because by \eqref{eq:f_h_j_finite}, $H_j(\theta,\theta^*) \leq 0.$ %Note that all source components of a reduced graph $\calH_r$ are disjoint by definition.
	\begin{align}
	&\sum_{r=1}^{t} r \sum_{j=1}^{n-\phi} {\bf \Phi}_{ij}(t,r+1) H_j(\theta,\theta^*) \nonumber\\
	&\leq \sum_{r:t-r\geq \nu} r  \left(\sum_{j \in P_r^i} \frac{\beta^\nu}{n} H_j(\theta,\theta^*) \right)\nonumber\\
	&\qquad\text{By Lemma~\ref{lm:f_matrix}}\nonumber\\
	&\leq  - \sum_{r:t-r\geq \nu} r  \left(\frac{\beta^\nu}{n} C_1\right)\qquad\text{by \eqref{eq:fault_c0_c1} and \eqref{eq:fault_h_negative}} \nonumber\\
	&\leq - \frac{(t-\nu)^2}{2} \frac{\beta^{\nu}}{n} C_1. \label{eq:f_h_phi_bd}
	\end{align}
	Therefore by \eqref{eq:f_psi_h_add}, \eqref{eq:f_h_phi_bd} and Lemma~\ref{lm:f_L_H_convergence}, we get
	\begin{equation*}
	  \lim_{t \rightarrow \infty} \frac{1}{t^2}\psi_t^i(\theta,\theta^*) \leq - \frac{1}{2n} \beta^{\nu} C_1.
	\end{equation*}
	Thus, $\psi_t^{i}(\theta,\theta^*) \toas -\infty$ and $\mu_t^i(\theta) \toas 0$ for all non-faulty agents and $\theta \neq \theta^*.$

\end{proof}

\section{Conclusion}
\label{sec:conclusion}

In this work, we presented an analysis of a distributed algorithm for non-Bayesian learning over multi-agent network with adversaries which is based on a weaker assumption on the underlying network than the one present in literature \cite{Su16c,Su18a}. Our analysis does not need the network to achieve consensus among all the fault-free agents. It shows that if all the agents, whose information can reach an agent, can collaboratively correctly estimate the true state of the world then the agent itself can estimate the true state. The analysis presented here proves a sufficient network topological condition and global identifiability of the network to correctly estimate the true state by all fault-free agents. It will be interesting to prove this condition also being the necessary to estimate true state in a network with adversarial agents. 

The analysis also extends to a network with no adversaries, i.e., $f=0,$ and leads to much weaker assumption on the network as compared to the one present in literature. Previous analysis in \cite{Su16a,Nedic15} for fault-free network assume that the network is strongly connected thus capable of achieving distributed consensus. The analysis of Section~\ref{sec:faulty} can be extended to fault-free network that can have more than one connected components and each connected component may not be strongly connected. 

In this work we assume a synchronous system, i.e., in each round of the algorithm every agent sends its information at the same time to all its neighbors. In future, we would like to extend this work in case of asynchronous setting. In addition to that we assume the network to be static, i.e., neighborhood of any agent is not changing over the course of execution of the algorithm. We believe that our results can be easily generalized to the dynamic networks where the network topology is changing with time.

\bibliographystyle{abbrv}
\bibliography{consensus_bib}

\begin{thebibliography}{10}

\bibitem{Cattivelli11}
F.~S. Cattivelli and A.~H. Sayed.
\newblock Distributed detection over adaptive networks using diffusion
  adaptation.
\newblock {\em IEEE Transactions on Signal Processing}, 59(5):1917--1932, May
  2011.

\bibitem{Gale03}
D.~Gale and S.~Kariv.
\newblock Bayesian learning in social networks.
\newblock {\em Games and Economic Behavior}, 45:329--346, 1988.

\bibitem{Gallager88}
R.~G. Gallager.
\newblock Finding parity in simple broadcast networks.
\newblock {\em IEEE Trans. on Info. Theory}, 34:176--180, 1988.

\bibitem{Jadbabaie12}
A.~Jadbabaie, P.~Molavi, A.~Sandroni, and A.~Tahbaz-Salehi.
\newblock Non-baysian social learning.
\newblock {\em Games and Economic Behavior}, 76:210--225, 2012.

\bibitem{Jakovetic12}
D.~Jakoveti{\'c}, J.~M. Moura, and J.~Xavier.
\newblock Distributed detection over noisy networks: Large deviations analysis.
\newblock {\em IEEE Transactions on Signal Processing}, 60(8):4306--4320, 2012.

\bibitem{Nedic15}
A.~Nedi{\'c}, A.~Olshevsky, and C.~A. Uribe.
\newblock Nonasymptotic convergence rates for cooperative learning over
  time-varying directed graphs.
\newblock In {\em American Control Conference (ACC)}, pages 5884--5889, 2015.

\bibitem{Nedic16}
A.~Nedi{\'c}, A.~Olshevsky, and C.~A. Uribe.
\newblock A tutorial on distributed (non-bayesian) learning: Problem,
  algorithms and results.
\newblock In {\em 55th IEEE Conf. on Decision and Control}, pages 6795--6801,
  2016.

\bibitem{Shahrampour13}
S.~Shahrampour and A.~Jadbabaie.
\newblock Exponentially fast parameter estimation in networks using distributed
  dual averaging.
\newblock In {\em IEEE Conference on Decision and Control (CDC)}, pages
  6196--6201. JAI Press, 2013.

\bibitem{Su16a}
L.~Su and N.~H. Vaidya.
\newblock Defending non-bayesian learning against adversarial attacks.
\newblock {\em https://arxiv.org/abs/1606.08883}, 2016.

\bibitem{Su16c}
L.~Su and N.~H. Vaidya.
\newblock Non-bayesian learning in the presence of byzantine agents.
\newblock In {\em International Symposium on Distributed Computing}, pages
  414--427. Springer, 2016.

\bibitem{Su18a}
L.~Su and N.~H. Vaidya.
\newblock Defending non-bayesian learning against adversarial attacks.
\newblock {\em Distributed Computing
  https://doi.org/10.1007/s00446-018-0336-4}, 2018.

\bibitem{Tsitsiklis93}
J.~N. Tsitsiklis.
\newblock Decentralized detection.
\newblock In {\em Advances in Statistical Signal Processing}, pages 297--344.
  JAI Press, 1993.

\bibitem{Tsitsiklis84}
J.~N. Tsitsiklis and M.~Athans.
\newblock Convergence and asymptotic agreement in distributed decision
  problems.
\newblock {\em IEEE Transactions on Automatic Control}, 29(1):42--50, 1984.

\bibitem{Vaidya12}
N.~H. Vaidya.
\newblock Matrix representation of iterative approximate byzantine consensus in
  directed graphs.
\newblock {\em available at https://arxiv.org/abs/1203.1888}, 2012.

\bibitem{Vaidya14}
N.~H. Vaidya.
\newblock Iterative byzantine vector consensus in incomplete graphs.
\newblock {\em Distributed Computing and Networking}, pages 14--28, 2014.

\bibitem{Varshney12}
P.~K. Varshney.
\newblock {\em Distributed Detection and Data Fusion}.
\newblock Springer Science \& Business Media, 2012.

\end{thebibliography}

\appendices

\section{Proof of Lemma~\ref{lm:f_L_H_convergence}}
\label{app:f_L_H_convergence}
%\begin{proof}
	To prove Lemma~\ref{lm:f_L_H_convergence}, we will show that almost surely for any $\epsilon>0$ there exists sufficiently large $t_\epsilon$ such that for all $t \geq t_\epsilon,$
	\begin{multline}
	\frac{1}{t^2} \left| \sum_{r=1}^{t}\sum_{j=1}^{n-\phi} {\bf \Phi}_{ij}(t,r+1) \left( \sum_{k=1}^r \calL_k^j(\theta, \theta^*) \right. \right. \\ 
	\left. \left. -rH_j(\theta, \theta^*)\right) \right| \leq \epsilon.
	\label{eq:f_ell_h_epsilon}
	\end{multline}
	For ease of notations, we will represent the left hand side of \eqref{eq:f_ell_h_epsilon} by $\frac{1}{t^2} Q(1,t).$ We prove this by dividing $r$ into two ranges $r\in \{1, \cdots, \sqrt{t}\}$ and $r\in \{\sqrt{t}+1, \cdots, t\}.$
	For $r\in \{1, \cdots, \sqrt{t}\},$ we have,
	\begin{align*}
	\frac{1}{t^2}Q(1,\sqrt{t}) &\le \frac{1}{t^2} \sum_{r=1}^{\sqrt{t}} \sum_{j=1}^{n-\phi}{\bf \Phi}_{ij}(t,r+1)\pth{2rC_0}\\
	&=\frac{1}{t^2} \pth{2C_0}  \sum_{r=1}^{\sqrt{t}}r \le C_0\pth{\frac{1}{t}+\frac{1}{t^{\frac{3}{2}}}}.
	\end{align*}
	Here first inequality is due to \eqref{eq:f_h_j_finite} and finiteness of $|\ell_k^j(\theta,\theta^*)|.$
	Thus, there exists $t_\epsilon^1$ such that for all $t \geq t_\epsilon^1$, $\frac{1}{t^2}Q(1,\sqrt{t})	\le \frac{\epsilon}{2}.$
	
	As $\calL_k^j(\theta, \theta^*)$'s are i.i.d., due to Strong Law of Large Numbers, we get $\frac{1}{r}\sum_{k=1}^r \calL_k^j(\theta, \theta^*)-H_j(\theta, \theta^*)\toas 0.$ Thus for each convergent sample path, there exists $r_\epsilon$ such that for any $r \geq r_\epsilon$,
	$\left| \frac{1}{r}\sum_{k=1}^r \calL_k^j(\theta, \theta^*)-H_j(\theta, \theta^*)\right | \le \frac{\epsilon}{2}.$
	Thus for $r \geq \sqrt{t}$ there exists sufficiently large $t_\epsilon^2$ such that for all $t\geq t_\epsilon^2,$ $r \geq \sqrt{t}$ is large enough and 
	\[ \left |\frac{1}{r}\sum_{k=1}^r \calL_k^j(\theta, \theta^*)-H_j(\theta, \theta^*)\right |\le \frac{\epsilon}{2}.\]
	For all $ t \geq t_\epsilon^2$,
	\begin{align*}
	\frac{1}{t^2}Q(\sqrt{t},t) &\le \frac{1}{t}\sum_{r=\sqrt{t}+1}^{t} \sum_{j=1}^{n-\phi}{\bf \Phi}_{ij}(t,r+1)\frac{r}{t} \frac{\epsilon}{2}\\
	&=\frac{1}{t}\sum_{r=\sqrt{t}+1}^{t} \frac{r}{t} \frac{\epsilon}{2}=\frac{\epsilon}{2}\frac{1}{t^2} \sum_{r=\sqrt{t}+1}^{t}r \\
	&=\frac{\epsilon}{4}\frac{1}{t^2} \pth{t^2-\sqrt{t}}\le ~\frac{\epsilon}{2}.
	\end{align*}

	Therefore, for every convergent path for any $\epsilon>0$, there exists $t_\epsilon =\max \{t_\epsilon^1, t_\epsilon^2\}$, such that for any $t \geq t_\epsilon$, $\frac{1}{t^2} Q(1,t) \leq \epsilon.$
	Thus \eqref{eq:f_ell_h_epsilon} holds almost surely and Lemma~\ref{lm:f_L_H_convergence} is proved.
%\end{proof}

\end{document}